\documentclass[conference,10pt,letterpaper]{IEEEtran}
\IEEEoverridecommandlockouts

\usepackage[T1]{fontenc}
\usepackage[utf8]{inputenc}
\usepackage{cite}
\usepackage{amsmath,amssymb,amsthm,mathtools}
\usepackage{textcomp}
\usepackage{xcolor}
\usepackage{booktabs}
\usepackage{array}
\usepackage{enumitem}
\usepackage{microtype}
\usepackage{balance}
\usepackage[hyphens]{url}
\usepackage[hidelinks,breaklinks]{hyperref}

\newtheorem{theorem}{Theorem}

\newtheorem{corollary}[theorem]{Corollary}
\newtheorem{proposition}[theorem]{Proposition}
\theoremstyle{definition}
\newtheorem{definition}{Definition}
\theoremstyle{remark}
\newtheorem{remark}{Remark}

\newcommand{\hash}[1]{H(#1)}

\newcommand{\cid}{\mathrm{CID}}

\setlist[itemize]{leftmargin=1.4em,topsep=2pt,itemsep=1pt,parsep=0pt}
\setlist[enumerate]{leftmargin=1.8em,topsep=2pt,itemsep=1pt,parsep=0pt}

\begin{document}

\title{Safecloud: A Distributed, Encrypted\\
Storage Cloud for Streaming}

\author{\IEEEauthorblockN{Gregory Magarshak}
\IEEEauthorblockA{IENYC\\
Email: gmagarshak@faculty.ienyc.edu}}

\maketitle

\begin{abstract}
We present \emph{Safecloud}, a distributed, encrypted, self-pricing storage and
streaming network in which the nodes that provide storage never observe
plaintext and never hold decryption keys. Safecloud splits each file into
fixed-size chunks, encrypts every chunk on the owner's device before it leaves,
and distributes the resulting ciphertext across two classes of node:
\emph{Drops}, which run inside ordinary web browser tabs and store encrypted
chunks in IndexedDB, and \emph{Jets}, federated routing servers that match
chunks to Drops and serve retrieval requests. Only the \emph{Cloud}---the file
owner or an authorised grantee---ever holds keys. We make five contributions.
First, a \emph{one-root key hierarchy}: every key in the system is derived
deterministically from a single root secret by HKDF with domain-separated
labels, so that nothing but the root need ever be stored, and we prove that
chunk keys derived by an owner and by a range-scoped grantee coincide
(\emph{derivation agreement}). Second, \emph{convergent content addressing}:
because derivation is deterministic, identical content under the same root
yields identical ciphertext and hence identical content identifiers, enabling
deduplication without exposing plaintext, and we show the content identifier
binds the authenticated ciphertext so that a Drop can verify it holds the
correct chunk with no key (\emph{blind verifiability}). Third, three
\emph{parallel trees over one navigation path}---a public Merkle tree for
integrity, a top-down key-derivation tree for confidentiality, and an access
tree for authorisation---so that one link path simultaneously locates a chunk's
integrity proof, its decryption key, and its access grant, and we prove that
Merkle-verified retrieval detects any substituted or corrupted chunk
(\emph{retrieval soundness}). Fourth, the top-down key tree doubles as a
\emph{streaming} structure: a player derives each successive segment key in
$O(1)$ from a node it already holds, so seeking is a derivation rather than a
re-traversal, a file's parallel tracks (video, audio, captions) are independent
subtrees unlockable separately, and access can be scoped to a segment range
within a track---a combination of random-access decryptable streaming and
per-track, per-segment monetisation we believe no prior encrypted-storage network
offers. Fifth, Jets and Drops earn Safebux verifiably for routing and storage,
settled through OpenClaiming trustlines and kept honest by a one-signature
proof-of-storage challenge under the chilling-effect Proof-of-Corruption
discipline of Intercloud~\cite{magarshak2026intercloud}---a zero-sum economy we
argue is structurally cheaper than Filecoin's proof-of-replication \emph{sealing},
a deliberately slow encoding that does not itself provide
confidentiality~\cite{filecoin2017,filecoin_porep}. Safecloud composes with the
attested execution of Safebox~\cite{magarshak2026safebox} where confidential
server-side custody is required, and uniquely places both supply and demand
inside browser sessions: a visitor leaving a tab open becomes a storage provider
with no installation. We describe the architecture, the cryptographic
construction, a threat model, and an open-source reference implementation, and we
state precisely which components are implemented and which are designed and in
progress. Safecloud~2.0~\cite{magarshak2026safecloud2} extends the same substrate
from verifiable storage to verifiable computation.
\end{abstract}

\begin{IEEEkeywords}
Decentralised storage, end-to-end encryption, convergent encryption, content
addressing, Merkle trees, capability delegation, browser-based storage,
key derivation, encrypted streaming, proof of storage, storage economics.
\end{IEEEkeywords}

\section{Introduction}
\label{sec:intro}

\subsection{Storage that the storage provider cannot read}

A decentralised storage network asks strangers to hold one's data. The central
difficulty is trust: a storage provider one does not control should be able to
store and serve data without being able to read it, alter it undetectably, or
hold it hostage. Existing decentralised storage systems address parts of this.
Content-addressed systems~\cite{benet2014ipfs} guarantee that retrieved data
matches its identifier but do not by themselves encrypt, do not price storage,
and rely on out-of-band pinning to prevent garbage collection. Incentivised
storage networks~\cite{filecoin2017,storj2018} pay providers and prove storage,
but require operators to provision dedicated infrastructure and place the
supply side outside the browser. None makes the storage provider an ordinary
web page.

Safecloud takes a single design position and follows it through: \emph{the
nodes that store data see only ciphertext, and the nodes that route data hold
no keys}. Encryption happens on the owner's device before any chunk leaves it;
keys are never transmitted; and the identifier by which a chunk is stored and
served is computed over the authenticated ciphertext, so a storage node can
confirm it holds the right bytes without being able to interpret them. The
consequence is that storage can be offered by untrusted parties---including, in
the reference implementation, a browser tab volunteering idle
IndexedDB---because what they hold is opaque to them.

\subsection{Three roles}

Safecloud has three roles, summarised in Table~\ref{tab:roles}. The
\emph{Cloud} is the owner/consumer SDK that runs in the browser, splits and
encrypts files, holds the root secret, and is the only party that can decrypt.
\emph{Jets} are federated Node.js routing servers (with a browser-side socket
client) that match chunks to Drops, serve retrieval, issue proof-of-storage
challenges, and verify payment and authorisation claims; they never see
plaintext. \emph{Drops} are browser tabs that store encrypted chunks in
IndexedDB and serve them on request; they never see plaintext and never hold
keys.

\begin{table}[t]
\centering
\caption{The three Safecloud roles.}
\label{tab:roles}
\small
\begin{tabular}{@{}lll@{}}
\toprule
Role & Sees plaintext & Where it runs \\
\midrule
Cloud & yes (owner/grantee) & browser SDK \\
Jet   & no  & Node.js server + browser client \\
Drop  & no  & browser tab (IndexedDB) \\
\bottomrule
\end{tabular}
\end{table}

\subsection{Contributions}

\begin{enumerate}
\item \textbf{One-root key hierarchy with derivation agreement.} Every key is
derived from a single 32-byte root by HKDF with domain-separated labels; nothing
but the root is stored. We prove that an owner and a range-scoped grantee derive
identical chunk keys (Theorem~\ref{thm:agree}), which is what makes range
delegation work without re-encryption (\S\ref{sec:keys}).

\item \textbf{Convergent content addressing with blind verifiability.}
Deterministic derivation makes encryption convergent, enabling deduplication
without plaintext exposure; the content identifier is computed over ciphertext
concatenated with the authentication tag, so a keyless Drop can verify chunk
integrity (Theorem~\ref{thm:blind}) (\S\ref{sec:cid}).

\item \textbf{Three parallel trees over one navigation path.} A public Merkle
tree (integrity), a top-down key-derivation tree (confidentiality), and an
access tree (authorisation) share one link-path addressing scheme; we prove
Merkle-verified retrieval is sound against substitution and corruption
(Theorem~\ref{thm:retrieval}) (\S\ref{sec:trees}).

\item \textbf{Range capability delegation.} An owner delegates a chunk interval
by deriving a subtree key and signing a scoped statement, verifiable offline by
Jets and Drops; binding of the two public roots is itself signed and publicly
checkable (\S\ref{sec:delegation}).

\item \textbf{Streaming by top-down derivation, with parallel tracks and
segment-scoped unlocking.} The top-down key-derivation tree is not only a
confidentiality mechanism but a \emph{streaming} mechanism: a player derives the
key for the next segment of a track in $O(1)$ from a node it already holds,
without re-deriving from the root, while a file's parallel tracks (e.g.\ video,
audio, captions) are independent subtrees that can be unlocked separately, and
access can be scoped to a contiguous segment range within a track. We show this
yields random-access decryptable streaming and fine-grained monetisation that,
to our knowledge, no prior encrypted-storage network combines
(\S\ref{sec:streaming}).

\item \textbf{A verifiable, zero-sum storage economy.} Jets and Drops earn
Safebux for routing and storage, settled through the OpenClaiming trustline
envelope and kept honest by lightweight proof-of-storage challenges under the
chilling-effect discipline of Intercloud~\cite{magarshak2026intercloud}. We
argue this is structurally cheaper than the proof-of-replication
\emph{sealing} of Filecoin, whose proof obligation is a deliberately slow,
sequentially-hard encoding that does not even provide
confidentiality~\cite{filecoin2017,filecoin_porep}, and we contrast it with
SOTA decentralised storage along the axes that matter
(\S\ref{sec:economy}, Table~\ref{tab:sota}).

\item \textbf{An open-source reference implementation}, with a precise statement
of what is implemented versus designed-and-in-progress (\S\ref{sec:impl}).
\end{enumerate}

\section{Related Work}
\label{sec:related}

\textbf{Content-addressed storage.}
IPFS~\cite{benet2014ipfs} addresses data by the hash of its content and
guarantees retrieved bytes match their identifier. Safecloud adopts CIDv1
identifiers but computes them over \emph{authenticated ciphertext}, so the
identifier serves integrity without leaking plaintext, and pairs them with a
deterministic encryption layer IPFS does not provide.

\textbf{Incentivised decentralised storage.}
Filecoin~\cite{filecoin2017} and Storj~\cite{storj2018} pay operators for
storage and verify it by proofs of space-time or audits. Safecloud's
proof-of-storage challenge serves the same role, but the supply side is a
browser tab rather than provisioned infrastructure, and pricing and payment use
the OpenClaiming trustline envelope inherited from
Intercloud~\cite{magarshak2026intercloud} rather than a bespoke chain.

\textbf{Convergent encryption.}
Convergent encryption~\cite{douceur2002reclaiming} derives a content's key from
the content, enabling cross-user deduplication at the cost of known-plaintext
confirmation attacks. Safecloud's convergence is \emph{owner-scoped}: keys
derive from the owner's root, so deduplication holds within a root (re-uploads,
versions) without exposing content to other owners, trading global dedup for
confidentiality.

\textbf{Probabilistic trees and set reconciliation.}
Prolly trees~\cite{prolly2020} provide history-independent, structurally-shared
B-trees whose equal subtrees reconcile in time proportional to the difference.
Safecloud uses them so a reconnecting Drop and its Jet reconcile chunk
inventories in $O(\mathrm{diff}\cdot\log n)$, with a Bloom
filter~\cite{bloom1970} for the cold-start first contact.

\textbf{Capability-based access.}
Capability systems grant authority by unforgeable token rather than by
identity~\cite{saltzer1975protection}. Safecloud's range delegations are scoped
capabilities: a signed statement binds a derived subtree key to a chunk
interval and expiry, verifiable offline.

\textbf{Substrate.}
Safecloud reuses the chilling-effect Proof-of-Corruption discipline and
swarm/epoch economics of Intercloud~\cite{magarshak2026intercloud} for
settlement and proof-of-storage, and composes with the attested execution of
Safebox~\cite{magarshak2026safebox} for confidential server-side custody.
Safecloud~2.0~\cite{magarshak2026safecloud2} extends the same Cloud/Jet/Drop
substrate from storage to verifiable computation.

\section{The One-Root Key Hierarchy}
\label{sec:keys}

\subsection{Derivation primitive}

All keys derive from a single primitive, HKDF-SHA256 with a domain-separation
label and a deterministic salt.

\begin{definition}[Derivation]
\label{def:derive}
$\mathsf{derive}(s,\ell,c) = \mathsf{HKDF}\text{-}\mathsf{SHA256}\big(\mathrm{IKM}=s,\,
\mathrm{salt}=\hash{c},\, \mathrm{info}=\ell,\, L=32\big)$, where $s$ is a binary
seed, $\ell$ a unique label, and $c$ a context string (default empty). The salt
is a hash of context for domain separation, not for entropy, which is supplied
by $s$ (RFC~5869 permits deterministic salts).
\end{definition}

From a 32-byte root $r$ (kept secret by the owner; not in the manifest) the
hierarchy is
\begin{align*}
e &= \mathsf{derive}(r, \texttt{enc.root}) &&\text{(encryption root)}\\
a &= \mathsf{derive}(r, \texttt{access.root}) &&\text{(access root)}\\
k^{\mathrm{sub}}_{[S,E)} &= \mathsf{derive}(e, \texttt{subtree.}S.E) &&\text{(subtree key)}\\
k_i &= \mathsf{derive}\big(k^{\mathrm{sub}}_{[S,E)}, \texttt{chunk.key.}j\big), \quad j = i-S\\
v_i &= \mathsf{derive}\big(k^{\mathrm{sub}}_{[S,E)}, \texttt{chunk.iv.}j\big), \quad j = i-S
\end{align*}
where $k_i$ is the AES-256-GCM key and $v_i$ the 12-byte IV for absolute chunk
index $i$ within a subtree covering $[S,E)$, and $j$ is the \emph{relative}
index.

\subsection{Derivation agreement}

The unit of delegation is the subtree key. The following property is what makes
delegation work without re-encrypting anything.

\begin{theorem}[Derivation agreement]
\label{thm:agree}
Let an owner hold root $r$ and a grantee receive the subtree key
$k^{\mathrm{sub}}_{[S,E)}$ for a granted range $[S,E)$. For every absolute chunk
index $i \in [S,E)$, the chunk key and IV derived by the owner from $r$ and by
the grantee from $k^{\mathrm{sub}}_{[S,E)}$ are identical.
\end{theorem}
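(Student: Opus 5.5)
The plan is to argue from determinism: $\mathsf{derive}$ in Definition~\ref{def:derive} is a deterministic function of its three arguments $(s,\ell,c)$, since HKDF-SHA256 with fixed salt $\hash{c}$, info $\ell$ and output length $L=32$ has no randomness. Equal inputs therefore give equal outputs, so agreement reduces to showing that the owner and the grantee evaluate the \emph{same} final call $\mathsf{derive}(k^{\mathrm{sub}}_{[S,E)},\texttt{chunk.key.}j)$ (and likewise for the IV). That means checking that both parties supply the same seed and the same label, and that the context string is the same default in both cases.

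The steps, in order, are as follows.

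\begin{enumerate}
\item \textbf{Seed.} I first fix the owner's computation. From $r$ the owner computes $e=\mathsf{derive}(r,\texttt{enc.root})$ and then $\mathsf{derive}(e,\texttt{subtree.}S.E)$. By the definition of the hierarchy this value \emph{is} $k^{\mathrm{sub}}_{[S,E)}$. The grantee receives exactly that value from the owner as the grant. Hence the seed fed to the final call is bit-identical for both parties.
\item \textbf{Label.} For absolute index $i\in[S,E)$, both parties compute the relative index $j=i-S$ from the same public pair $(i,S)$. The hypothesis $i\in[S,E)$ gives $0\le j<E-S$, so $j$ is a well-defined nonnegative integer. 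Both then form the label $\texttt{chunk.key.}j$ (respectively $\texttt{chunk.iv.}j$) by the same string encoding of $j$.
\item \textbf{Context.} The context $c$ is the empty default in both calls.
\item \textbf{Conclusion.} With seed, label and context all equal, determinism gives $k_i^{\mathrm{owner}}=k_i^{\mathrm{grantee}}$. The IV $v_i$ follows by the same argument, since it is either the full 32-byte output or the same fixed 12-byte truncation of it.
\end{enumerate}

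The main obstacle is not cryptographic but one of \emph{well-definedness}. The subtree key is indexed by the range $[S,E)$, so I must make explicit that the owner uses the \emph{same} subtree decomposition as the one granted. That is, the owner's encryption of chunk $i$ was performed under the subtree covering $[S,E)$, rather than some other covering range $[S',E')$. If the owner had encrypted under a different covering range, the relative index and seed would differ and agreement would fail. I would therefore state the standing convention that the granted range is a node of the owner's key tree used at encryption time. I would also note that the encoding of $S$, $E$ and $j$ into labels is canonical, for example decimal with no padding, so that the two parties cannot form different strings for the same integers. Under these conventions the theorem is an immediate consequence of functional determinism.
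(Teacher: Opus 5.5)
Your proposal is correct and takes essentially the same approach as the paper: both reduce agreement to the owner and the grantee making the identical final call $\mathsf{derive}(k^{\mathrm{sub}}_{[S,E)},\texttt{chunk.key.}(i-S))$ (and likewise for the IV), and then invoke determinism of $\mathsf{derive}$. Your added conventions are a fair sharpening of what the paper's proof uses silently, since its owner-side computation is simply taken to pass through $k^{\mathrm{sub}}_{[S,E)}$. These are that chunk $i$ was actually encrypted under the granted subtree rather than some other covering range, and that integers are encoded canonically into labels.
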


\begin{proof}
The owner computes $k^{\mathrm{sub}}_{[S,E)} = \mathsf{derive}(e,
\texttt{subtree.}S.E)$ with $e=\mathsf{derive}(r,\texttt{enc.root})$, then
$k_i = \mathsf{derive}(k^{\mathrm{sub}}_{[S,E)}, \texttt{chunk.key.}(i-S))$. The
grantee is given exactly $k^{\mathrm{sub}}_{[S,E)}$ and computes $k_i =
\mathsf{derive}(k^{\mathrm{sub}}_{[S,E)}, \texttt{chunk.key.}(i-S))$. Both apply
$\mathsf{derive}$ to the same seed $k^{\mathrm{sub}}_{[S,E)}$ with the same label
$\texttt{chunk.key.}(i-S)$, using the relative index $j=i-S$. By determinism of
$\mathsf{derive}$ (Definition~\ref{def:derive}), the outputs coincide;
identically for the IV. The grantee never needs $r$ or $e$.
\end{proof}

\begin{remark}
The relative index is essential: it lets a grantee who holds only a subtree key
derive keys positioned from the start of \emph{their} range, while the same
keys, indexed absolutely, are reproduced by the owner. The authenticated-data
binding of \S\ref{sec:cid} uses the \emph{absolute} index instead, so chunks
cannot be transplanted across positions.
\end{remark}

The next property is what makes range and track delegation \emph{safe} rather
than merely convenient: a subtree key confers access to its own range and to
nothing else. It is the security counterpart of derivation agreement.

\begin{theorem}[Delegation containment]
\label{thm:contain}
Model $\mathsf{derive}$ as a pseudo-random function keyed by its seed. Given the
subtree key $k^{\mathrm{sub}}_{[S,E)}$, a computationally bounded grantee gains no
non-negligible advantage in computing (i) any chunk key $k_{i'}$ for
$i'\notin[S,E)$, (ii) the subtree key of any sibling track, or (iii) the
encryption root $e$ or the file root $r$.
\end{theorem}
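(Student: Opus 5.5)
We argue by a standard hybrid over the two levels of the derivation tree, using the PRF model of $\mathsf{derive}$ stated in the theorem. We write $F_s(\ell)=\mathsf{derive}(s,\ell)$ for the function keyed by seed $s$; the context is fixed, so the salt $\hash{c}$ is a public constant. The grantee's view is exactly $k^{\mathrm{sub}}_{[S,E)}=F_e(\texttt{subtree.}S.E)$, together with whatever it derives from it. The labels are domain-separated, so $\texttt{subtree.}S.E$, $\texttt{subtree.}S'.E'$ for any other interval or track, $\texttt{enc.root}$ and $\texttt{access.root}$ are pairwise distinct inputs. We assume the label encoding is injective; this is what the ``unique label'' clause of Definition~\ref{def:derive} provides.

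\textbf{Hybrid 0 to Hybrid 1.} First replace $e=F_r(\texttt{enc.root})$ by a uniformly random 32-byte string. Any distinguisher between the two worlds yields a PRF adversary against $F_r$, because $r$ is uniform and is never revealed to the grantee. \textbf{Hybrid 1 to Hybrid 2.} Next replace $F_e$ by a truly random function $R$. In Hybrid~2 the grantee holds $R(\texttt{subtree.}S.E)$. Every other subtree key, whether for a sibling track or for a disjoint or overlapping interval, is $R$ evaluated at a \emph{different} input. It is therefore uniform and independent of the view, so the probability of outputting it is $2^{-256}$. This gives (ii). It also gives (iii) for $e$: in Hybrid~2 the grantee's view is independent of $e$. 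Recovering $e$ in Hybrid~1 would let one compute $F_e(\texttt{subtree.}S.E)$ and compare it with the challenge value, which gives a PRF distinguisher. For $r$ we argue the same way one level up: recovering $r$ yields $e$ and hence a distinguisher in Hybrid~0.

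\textbf{Claim (i).} A chunk key $k_{i'}$ with $i'\notin[S,E)$ is derived as $F_{k'}(\texttt{chunk.key.}(i'-S'))$, where $k'$ is the subtree key of some range $[S',E')\ni i'$. Since $i'\notin[S,E)$, we have $[S',E')\neq[S,E)$, so $k'$ is one of the keys already shown uniform and independent in Hybrid~2. A further hybrid replaces $F_{k'}$ by a random function. After that step $k_{i'}$ is uniform and independent of the view. The total advantage is bounded by a sum of a constant number of PRF advantages plus $2^{-256}$.

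The main obstacle is the overlap case, which the statement glosses over. Suppose the owner also issued, or the file layout uses, a different subtree $[S',E')$ that overlaps $[S,E)$. Then chunk $i\in[S,E)$ may have been encrypted under the key from $[S',E')$ rather than $[S,E)$. In that case Theorem~\ref{thm:agree} and containment both require that each absolute index belongs to exactly one canonical subtree. I would state this as an explicit assumption: subtrees partition the index space per track. Under it, ``$i'\notin[S,E)$'' implies that $i'$'s key is rooted at a distinct label, and the hybrid argument above goes through. I would also check the Remark's point that relative indices collide across subtrees: $\texttt{chunk.key.}0$ appears in every subtree. This causes no harm, because the seeds differ, and the PRF is keyed by the seed, not the label.
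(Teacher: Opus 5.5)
\paragraph{Overall.} Your argument follows the same route as the paper's proof and makes it precise. The paper's proof is a one-paragraph sketch over the derivation DAG. Every target lies above or beside $k^{\mathrm{sub}}_{[S,E)}$, so reaching it means either inverting $\mathsf{derive}$ (for $e$ and $r$) or guessing an unrelated PRF output (for sibling subtrees and out-of-range chunk keys), and both are simply declared negligible. Your hybrid sequence replaces $e$ by a uniform string, then $F_e$ and $F_{k'}$ by random functions. This turns the sibling and out-of-range cases into an explicit bound: a constant number of PRF advantages plus $2^{-256}$. It also surfaces the injective-label assumption that the paper uses silently. The paper's phrasing, for its part, covers derivation DAGs of any depth at once, such as the per-track level of \S\ref{sec:streaming}. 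In your flat two-level model that level costs one more routine hybrid.

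\paragraph{The step for $e$ needs repair.} You propose to compute $F_e(\texttt{subtree.}S.E)$ from the recovered seed and compare it with the challenge value. That test is not sound. In the random-function world it passes whenever the adversary finds \emph{some} seed mapping $\texttt{subtree.}S.E$ to the value it was shown, and PRF security does not rule this out. For example, given a PRF $F$, set $F'_{(k,u)}(\ell)=u$ and $F'_{(k,u)}(x)=F_k(x)$ for $x\neq\ell$. Then $F'$ is a PRF, yet on input $y$ the seed $(0,y)$ passes your test in both worlds. The paper's appeal to the hardness of ``inverting'' $\mathsf{derive}$ glosses over the same point. There are two fixes:
\begin{itemize}
\item test the recovered seed at a label whose value the adversary never saw; or, more simply,
\item derive (iii) from (ii): whoever holds $e$ computes a sibling subtree key $F_e(\texttt{subtree.}S'.E')$, and whoever holds $r$ computes $e$.
\end{itemize}

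\paragraph{The partition assumption.} The assumption you flag is a genuine issue for Theorem~\ref{thm:agree}. Overlapping intervals yield unrelated keys, so the grantee's derived key must be the one actually used at upload. Containment does not need it, however. Your own step for (i) already shows that $i'\in[S',E')$ and $i'\notin[S,E)$ force $(S',E')\neq(S,E)$, hence a distinct label, with or without overlaps. You can therefore drop the assumption from this theorem.
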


\begin{proof}
All three targets sit \emph{above} or \emph{beside} $k^{\mathrm{sub}}_{[S,E)}$ in
the derivation DAG, never below it. Each chunk key in another range, each sibling
track's subtree key, and each ancestor key is computed by applying
$\mathsf{derive}$ to a seed the grantee does not hold: a different subtree key, a
different track node, $e$, or $r$ respectively. Recovering any of them from
$k^{\mathrm{sub}}_{[S,E)}$ requires either inverting $\mathsf{derive}$ (to climb to
an ancestor) or guessing an unrelated PRF output (to reach a sibling), both
negligible for a PRF under the stated assumption. The grantee can derive exactly
the labels reachable \emph{downward} from $k^{\mathrm{sub}}_{[S,E)}$, namely the
chunk keys and IVs for $i\in[S,E)$, and no others.
\end{proof}

\begin{corollary}[Independent track and segment unlocking]
\label{cor:independent}
Granting the subtree key for one track or one segment range leaks no key material
for any other track or any chunk outside the range. Hence an owner may unlock
captions without video, or a preview segment without the remainder, with no
re-encryption and no risk of over-disclosure.
\end{corollary}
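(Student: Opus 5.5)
The corollary should follow directly from Theorem~\ref{thm:contain}, together with Theorem~\ref{thm:agree} for the ``no re-encryption'' half. I would split the proof into a leakage claim and an operational claim.

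\textbf{Leakage claim.} Fix a granted subtree key $k^{\mathrm{sub}}_{[S,E)}$. This may be a track-level node or a segment-range node within a track. The key material not covered by the grant falls into three classes. First, chunk keys and IVs $k_{i'}, v_{i'}$ with $i'\notin[S,E)$ in the same track. Second, subtree keys of sibling tracks, together with everything derived below them. Third, the ancestors $e$ and $r$. Theorem~\ref{thm:contain} (i)--(iii) covers each class directly, except the keys derived \emph{below} a sibling's subtree key. For these I would make a short reduction. Such a key is $\mathsf{derive}$ applied to a seed that is itself pseudorandom and unknown to the grantee. A grantee predicting it with non-negligible advantage would therefore yield a distinguisher for the PRF keyed by that sibling seed. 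Replacing the sibling seed by a uniform key is justified by Theorem~\ref{thm:contain}(ii), via a hybrid. IVs are handled identically, since they are PRF outputs under distinct labels of the same seeds.

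\textbf{Operational claim.} Since the grant is just $k^{\mathrm{sub}}_{[S,E)}$, Theorem~\ref{thm:agree} gives that the grantee recomputes exactly the $k_i, v_i$ the owner used for $i\in[S,E)$. The existing ciphertexts therefore decrypt unchanged, and no re-encryption is required. ``Captions without video'' is the instance where the grant is the captions track node and the video track is a sibling. ``Preview without remainder'' is the instance where the grant is a segment range and the remainder lies outside it.

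\textbf{Main obstacle.} The delicate step is the case where the owner issues several grants, or grants overlapping or nested ranges. Distinct labels $\texttt{subtree.}S.E$ give keys that are independent under the PRF assumption. However, a nested grant may be a \emph{descendant} of another grant, in which case the containment bound applies only to material outside the union of the grants. I would state the corollary per-grant, with ``outside the range'' read as outside the union of granted ranges. The PRF hybrid over all ungranted seeds then handles several grants at once.
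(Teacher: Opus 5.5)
Your route matches the paper's. The paper gives no separate proof: it reads the corollary directly off Theorem~\ref{thm:contain}, with Theorem~\ref{thm:agree} supplying ``no re-encryption''. Making explicit the keys that lie \emph{below} a sibling's subtree key is a real refinement, since Theorem~\ref{thm:contain}(ii) names only the sibling subtree key itself, and (i) is phrased by chunk index, which does not separate tracks.

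However, the justification you give for that step does not work. Theorem~\ref{thm:contain}(ii) says only that the sibling key $s$ cannot be \emph{computed} from the grant. That is unpredictability, and it does not license swapping $s$ for a uniform key in a hybrid. The swap needs $(k^{\mathrm{sub}}_{[S,E)}, s)$ to be \emph{indistinguishable} from $(k^{\mathrm{sub}}_{[S,E)}, u)$ with $u$ uniform. PRF security is promised only under a uniform (or pseudorandom) key, and an unpredictable seed may still be partly known.

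Take the swap instead from the PRF assumption at the common parent of the grant and the sibling (the track node, or $e$). Both are outputs of $\mathsf{derive}$ under one hidden seed at distinct labels, hence jointly pseudorandom. That parent is in turn pseudorandom by the same step, up to the uniform root $r$. This is the standard GGM-style hybrid from the root downward. Run that way, it covers siblings, everything below them, IVs, the ancestors $e$ and $r$, and several simultaneous grants in one pass, without going through Theorem~\ref{thm:contain} at all. It does need the rest of the grantee's view (manifest public keys, binding proof, signed delegation) to be simulatable without $e$. That means the encryption root's signing key must be a separately labelled derivation rather than $e$ itself.

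For the operational half, Theorem~\ref{thm:agree} only compares two derivations routed through the \emph{same} node $k^{\mathrm{sub}}_{[S,E)}$. Your phrase ``the $k_i, v_i$ the owner used'' silently assumes the chunks were encrypted at upload through that node. Under the paper's formulas, $k_i$ depends on the node it is derived from, not only on $i$. For example, a grant of $[S',E')\subsetneq[S,E)$ yields $\mathsf{derive}(\mathsf{derive}(e,\texttt{subtree.}S'.E'),\texttt{chunk.key.}(i-S'))$, which is unrelated to the key derived via $[S,E)$, so at most one of the two can open the stored ciphertext.

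Hence ``no re-encryption'' is established only for grants that are nodes of the upload-time tree, such as whole tracks or segments fixed at upload. The paper leaves this premise implicit too. State it as a hypothesis of the corollary rather than deriving it for an arbitrary $[S,E)$.
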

\label{sec:cid}

\subsection{Chunk encryption and identifier}

Each chunk $i$ is encrypted with AES-256-GCM under key $k_i$, IV $v_i$, and
additional authenticated data $\mathrm{AAD}_i = \texttt{"safecloud.chunk:"} \,\|\, i$
(absolute index), producing ciphertext $c_i$ and tag $t_i$. The content
identifier is a CIDv1 over the authenticated ciphertext:
\begin{equation}
\cid_i = \mathrm{CIDv1}\big(\hash{c_i \,\|\, t_i}\big),
\end{equation}
i.e.\ a multihash of the ciphertext concatenated with the GCM tag.

\begin{theorem}[Blind verifiability]
\label{thm:blind}
A Drop holding $(c_i, t_i)$ can verify that it holds the chunk named by
$\cid_i$ without any key and without learning the plaintext; and any party
observing $\cid_i$ learns nothing of the plaintext beyond what the ciphertext
length reveals.
\end{theorem}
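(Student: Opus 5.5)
The statement has two parts, and I would treat them separately: a correctness claim about keyless verification, and a confidentiality claim about what $\cid_i$ reveals.

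\emph{Verification.} This part is constructive and needs only that $\cid_i$ is a deterministic public function of $c_i\,\|\,t_i$. The Drop recomputes $\mathrm{CIDv1}(\hash{c_i\,\|\,t_i})$ from the bytes it stores and compares the result with the name it was asked to hold. The computation uses only public data and the public hash, so no key is needed. To argue that a match means the Drop really holds \emph{the} chunk, I would invoke collision resistance of $H$. If the Drop held some $(c',t')\neq(c_i,t_i)$ whose hash still matched, that pair together with $(c_i,t_i)$ would be an $H$-collision, which is negligible. Because the tag is included in the hashed string, the identifier binds the authenticated ciphertext and not merely $c_i$. So a stored $(c_i,t')$ carrying a corrupted tag is also rejected, even though the Drop itself cannot check GCM validity.

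\emph{Learning no plaintext.} I would reduce both claims, the Drop's view and any observer's view, to the IND-CPA security of AES-256-GCM under a pseudorandom key. The steps are:
\begin{enumerate}
\item Apply Theorem~\ref{thm:contain} and the PRF modelling of $\mathsf{derive}$: to any party without $r$ or the relevant subtree key, $(k_i,v_i)$ is computationally indistinguishable from a uniform key and IV. This lets me replace $k_i$ with a truly random key via a hybrid argument.
\item Under a random key, GCM ciphertext-plus-tag is indistinguishable from a random string of length $|m_i|+|t_i|$. This relies on each derived key being used exactly once, so the deterministic IV is never reused under the same key; the AAD is public and does not matter.
\item $\cid_i$ is a public deterministic function of $(c_i,t_i)$, so by post-processing it reveals no more than $(c_i,t_i)$, which by the previous step reveals only the length.
\end{enumerate}

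\emph{Main obstacle: convergence.} Encryption here is deterministic, so identical plaintexts under the same root and the same position give identical CIDs, and true IND-CPA indistinguishability fails. The claim therefore has to be stated relative to an adversary who does not hold the root. Within one root, two chunks collide only when both the content and the derivation path coincide, and since $(k_i,v_i)$ depends on the index, distinct positions use distinct keys. I would make this caveat explicit, as the paper's owner-scoped convergence discussion implicitly does. The guarantee then reads: for a single chunk, the observer learns nothing beyond its length; across one root's history, the observer can additionally see equality of re-uploaded identical chunks at the same position.

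I would also verify that keys are never reused across different messages at the same index. Otherwise GCM's nonce-reuse failure would break step 2. If re-encrypting a modified chunk under the same root and index is possible, I would flag it as an assumption of the theorem.
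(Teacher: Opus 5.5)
Your argument follows the same route as the paper's proof. You recompute $\mathrm{CIDv1}(\hash{c_i\,\|\,t_i})$ from the stored bytes with no key, then appeal to the confidentiality of AES-256-GCM and the fact that $\cid_i$ is a function of $(c_i,t_i)$. You are more careful than the paper, which omits the collision-resistance step behind ``equality confirms the named chunk,'' never justifies treating the derived $k_i$ as uniformly random, and cites IND-CPA security even though the encryption here is deterministic.

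Your convergence and nonce-reuse caveats are well founded, not hypothetical. In the paper's hierarchy $(k_i,v_i)$ depends only on $r$, the subtree bounds and the index, never on the content. So the paper's own scenario of storing a new version under the same root encrypts a modified chunk under a repeated key and IV. The confidentiality half therefore holds only in your qualified form: per chunk, with equality leakage for identical chunks at the same position, and assuming no key--IV pair is ever applied to two distinct plaintexts.
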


\begin{proof}
The Drop recomputes $\mathrm{CIDv1}(\hash{c_i \| t_i})$ from the bytes it holds
and compares to $\cid_i$; equality confirms it holds the named chunk. The
computation uses only $c_i,t_i$, not $k_i$, so no key is required. Because the
CID is a hash of ciphertext-plus-tag and AES-256-GCM is IND-CPA secure, the CID
and ciphertext are independent of the plaintext given the key is unknown; the
observer learns only ciphertext length.
\end{proof}

\subsection{Convergence and deduplication}

Because every derivation is deterministic, the same content under the same root
produces the same chunk keys, hence the same ciphertext, hence the same CIDs.
Re-uploads and shared versions therefore collide at the CID level and storage is
deduplicated, while the convergence is confined to a single root: two distinct
owners encrypting identical content under distinct roots produce distinct CIDs,
so Safecloud does not admit the cross-user known-plaintext confirmation attack
of global convergent encryption~\cite{douceur2002reclaiming}.

\begin{remark}[Position binding]
The AAD binds each ciphertext to its absolute index. Even an adversary holding
two chunk keys cannot swap two chunks: decrypting $c_i$ with the position $i'$
in the AAD fails GCM authentication when $i'\neq i$.
\end{remark}

\section{Three Parallel Trees over One Path}
\label{sec:trees}

Safecloud maintains three trees that share a single navigation scheme. A link
path such as $[\texttt{track},\texttt{data},0,1]$ simultaneously indexes:
(i) a node of the public \emph{Merkle tree} over chunk CIDs, used for integrity;
(ii) a node of the \emph{encryption tree}, the top-down chain of subtree-key
derivations, available only to key holders; and (iii) a node of the
\emph{access tree}, the chain of signed delegations enforced by Jets. The same
path therefore locates a chunk's integrity proof, its decryption key, and its
access grant.

\begin{theorem}[Retrieval soundness]
\label{thm:retrieval}
Let a Cloud retrieve chunk $i$ and verify the returned bytes against the public
Merkle root $\rho$ in the manifest using the supplied Merkle proof. If
verification succeeds, the returned ciphertext is the one committed at upload at
position $i$; any substitution or corruption of the chunk is detected with
overwhelming probability.
\end{theorem}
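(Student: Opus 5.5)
The plan is a reduction to collision resistance of $H$. We compose the CID binding of Theorem~\ref{thm:blind} with the standard Merkle-tree binding argument, and let the AAD position binding close the gap between ``same bytes'' and ``same position''.

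First, fix the model. The Merkle tree has the chunk identifiers $\cid_0,\dots,\cid_{n-1}$ as leaves, and each internal node is the hash of its children's hashes. The manifest root $\rho$ is authentic: it is signed or otherwise obtained by the Cloud out of band, so the adversary cannot choose it. The adversary controls the Drop and the Jet, and therefore controls the returned bytes $(c', t')$ and the Merkle proof $\pi$.

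\textbf{Step 1: reduce to a collision.} Suppose verification succeeds but $(c',t')\neq(c_i,t_i)$. The verifier recomputes $\cid' = \mathrm{CIDv1}(\hash{c'\|t'})$ and folds it up the authentication path $\pi$, using the left/right directions dictated by the bits of $i$. This produces a root equal to $\rho$. The honest computation from $\cid_i$ along the honest path also yields $\rho$.

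\textbf{Step 2: extract the collision.} Walk both hash chains from the root downward. At the first level where the hash inputs differ while the outputs agree, we obtain a collision for $H$. If the inputs agree all the way down to the leaf, then $\cid'=\cid_i$ with $c'\|t'\neq c_i\|t_i$, which is again a collision. In either case, a successful forgery yields an adversary that finds collisions with the same probability. That probability is negligible by collision resistance, which gives the ``overwhelming probability'' clause.

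\textbf{Step 3: handle position.} The path directions are derived from $i$ itself, not supplied by the adversary, so a valid proof ties the leaf to position $i$ in the tree. The absolute-index AAD of Remark~(Position binding) adds defence in depth. Even if an adversary replays a genuine committed chunk from another position $i'$ with a proof forged for $i$, GCM decryption under $\mathrm{AAD}_i$ fails. This means a ciphertext that is authentic but misplaced is rejected at decryption.

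The main obstacle is making Step 2 rigorous about the tree encoding. One must ensure that a leaf value can never be confused with an internal node, and that the proof length and shape are fixed by $n$ and $i$. Otherwise second-preimage tricks are possible, in which an internal node is presented as a leaf. I would first assume domain-separated leaf and node hashing, for example distinct prefix bytes for leaves and internal nodes, and a path length fixed by the manifest. I would then state this assumption explicitly as part of the theorem's model. With that assumption in place, the descent argument is the textbook one. Random corruption is a special case of substitution, so it needs no separate treatment.
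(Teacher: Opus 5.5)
Your proposal is correct and follows the paper's proof: both reduce a successful forgery to a collision in $H$, either along the Merkle path or in the hash $\hash{c_i\,\|\,t_i}$ underlying $\cid_i$ (the paper obtains the latter binding from Theorem~\ref{thm:blind}). The domain separation you flag as the main obstacle is already part of the paper's encoding, with leaves $\hash{0x00\,\|\,\cid_i}$ and internal nodes $\hash{0x01\,\|\,\mathrm{left}\,\|\,\mathrm{right}}$; your requirement that the path directions be computed from $i$ rather than supplied by the prover makes explicit a position-binding step the paper leaves implicit, and the AAD argument is optional defence in depth.
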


\begin{proof}
The manifest commits $\rho = \mathrm{root}$ of the Merkle tree built over the
ordered chunk CIDs at upload, with leaves $\hash{0x00 \| \cid_i}$ and internal
nodes $\hash{0x01 \| \mathrm{left} \| \mathrm{right}}$. A successful proof for
leaf $\cid_i$ against $\rho$ exhibits a hash path from the leaf to the committed
root; forging it for a different $\cid'_i \neq \cid_i$ requires a collision in
$H$. By blind verifiability (Theorem~\ref{thm:blind}), $\cid_i$ in turn commits
to the authenticated ciphertext $c_i\|t_i$, so any alteration of the stored
bytes changes $\cid_i$ and fails either the CID check or the Merkle check.
Hence under collision resistance of $H$, accepted bytes are the committed ones.
\end{proof}

\begin{remark}[Binding proof]
The manifest also carries a \emph{binding proof}: a signature by the encryption
root over the statement $(\textsf{encPubKey},\textsf{accessPubKey},\rho)$, so any
party can verify the two public roots and the Merkle root belong to the same
file without holding any secret.
\end{remark}

\section{Range Capability Delegation}
\label{sec:delegation}

An owner grants access to $[S,E)$ by deriving $k^{\mathrm{sub}}_{[S,E)}$ and
signing a scoped delegation over a context that encodes $(\rho, S, E,
\textsf{exp})$. The grantee receives the base64 subtree key (to decrypt, via
Theorem~\ref{thm:agree}) and a signed delegation proof (to present to Jets).
Verification is offline: a Jet or Drop checks that the derived secret matches
the committed hash in the statement, that the statement is signed by the
declared parent, that the context's range covers the requested chunks, and that
the expiry has not passed. No contact with the owner is required, and delegations
nest---an administrative grant may itself delegate sub-ranges---each step
verified independently.

\section{Streaming by Top-Down Derivation}
\label{sec:streaming}

The key-derivation tree of \S\ref{sec:trees} is usually described as a
confidentiality structure. Its more consequential role is as a \emph{streaming}
structure. We make the streaming model explicit because it is, to our knowledge,
the point on which Safecloud departs most sharply from prior encrypted storage.

\subsection{The player model}

A \emph{player} is a consumer that holds a capability for some region of a file
and decrypts chunks on demand as it plays them, rather than downloading and
decrypting the whole file first. Because derivation is top-down, a player that
holds the subtree key for a track does not return to the root to obtain the key
for the next segment: it derives each segment key in $O(1)$ from the
subtree key it already holds (Definition~\ref{def:derive}), so seeking to an
arbitrary timestamp costs one derivation and one chunk fetch, not a re-traversal.
Random-access decryptable streaming therefore falls directly out of the key
hierarchy; the hierarchy is the index.

\begin{proposition}[Random-access streaming]
\label{prop:stream}
A player holding $k^{\mathrm{sub}}_{[S,E)}$ can decrypt any chunk
$i\in[S,E)$ in $O(1)$ derivations from $k^{\mathrm{sub}}_{[S,E)}$, independent of
$i-S$, and may begin playback at any $i$ without deriving or fetching chunks
before $i$.
\end{proposition}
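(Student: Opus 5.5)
The argument is a direct unfolding of the key hierarchy. Fix $i\in[S,E)$ and set $j=i-S$. By the definitions, $k_i=\mathsf{derive}(k^{\mathrm{sub}}_{[S,E)},\texttt{chunk.key.}j)$ and $v_i=\mathsf{derive}(k^{\mathrm{sub}}_{[S,E)},\texttt{chunk.iv.}j)$. Each is a \emph{single} application of $\mathsf{derive}$ to the seed the player already holds. The label is formed by concatenating a fixed prefix with the decimal encoding of $j$. So the player needs exactly two HKDF invocations, plus one hash of the (empty) context for the salt, to obtain $(k_i,v_i)$. The count of derivations is the constant $2$ for every $j$, and in particular does not depend on $i-S$. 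By Theorem~\ref{thm:agree} these values coincide with the owner's, so decryption under them is decryption under the keys actually used at upload.

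Next I will argue that no chunk before $i$ is needed. First, the derivation of $(k_i,v_i)$ takes as input only $k^{\mathrm{sub}}_{[S,E)}$ and $j$; it uses no earlier $k_{i'}$ or $v_{i'}$. This is in contrast to a hash-chain or ratchet design, and I will note the contrast explicitly. Second, the ciphertext for chunk $i$ is fetched by its own $\cid_i$, which is listed in the manifest. Its integrity is checked by the Merkle proof for position $i$ (Theorem~\ref{thm:retrieval}), and that proof involves sibling hashes but no other chunk bodies. Third, AES-256-GCM decryption of $(c_i,t_i)$ under $(k_i,v_i,\mathrm{AAD}_i)$ uses only the absolute index $i$, which the player knows. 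No state carries over from chunk to chunk. Playback may therefore begin at any $i$, and seeking is simply the same procedure run at a new index.

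The main obstacle is to be honest about what ``$O(1)$'' measures. Strictly, the cost of HKDF grows with the label length, and the label length is $O(\log j)$ bits. I will state that the claim counts applications of $\mathsf{derive}$, or equivalently that labels fit within one SHA-256 block for any realistic index bound. Either way the cost is constant in the unit the proposition uses.

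I should also make precise that ``fetching'' refers to chunk bodies, not the manifest. The player is assumed to already hold the manifest, with its CID list and Merkle root. That is the same assumption Theorem~\ref{thm:retrieval} makes.
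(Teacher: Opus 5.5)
Your proposal is correct and follows essentially the paper's proof. Both arguments unfold $k_i=\mathsf{derive}(k^{\mathrm{sub}}_{[S,E)},\texttt{chunk.key.}(i-S))$ as a single derivation from the held subtree key via Theorem~\ref{thm:agree}, and both get independence from earlier chunks from per-chunk retrieval by CID and Merkle proof (Theorem~\ref{thm:retrieval}); your additions (deriving the IV explicitly, noting that GCM decryption needs only the absolute index in the AAD, and measuring ``$O(1)$'' in applications of $\mathsf{derive}$ rather than in label bit-length) make explicit points the paper leaves implicit.
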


\begin{proof}
By Theorem~\ref{thm:agree} the chunk key is
$k_i=\mathsf{derive}(k^{\mathrm{sub}}_{[S,E)},\texttt{chunk.key.}(i-S))$, a single
application of $\mathsf{derive}$ to the held subtree key with the label for the
relative index, requiring neither the root nor any other chunk's key. Retrieval
of chunk $i$ is by its CID and Merkle proof (Theorem~\ref{thm:retrieval}),
independent of other chunks. Hence playback may start at any $i$ at $O(1)$ key
cost and one fetch.
\end{proof}

\subsection{Parallel tracks}

A file is not a single sequence but a set of \emph{tracks}---for audiovisual
content, typically a video track, one or more audio tracks, and caption or
subtitle tracks---each a distinct subtree under the encryption root, addressed by
a track-qualified link path (e.g.\ $[\texttt{track},\texttt{captions},\dots]$).
Because the tracks are independent subtrees, a grantee can be given the caption
subtree key without the video subtree key, or the low-bitrate audio without the
high-bitrate video, and each track is fetched and decrypted in parallel by the
player and composited at playback. Track independence is structural: it follows
from each track being a separate child of the encryption root, so no track's keys
are derivable from another's.

\subsection{Segment-scoped unlocking}

Range delegation (\S\ref{sec:delegation}) applies within a track at segment
granularity: an owner may grant $[S,E)$ of the video track---a preview window, a
purchased chapter, a time-boxed rental---without granting the remainder, by
delegating only $k^{\mathrm{sub}}_{[S,E)}$. By Theorem~\ref{thm:contain} the
grantee can derive keys for, and only for, the granted segment range, and the Jet
enforces the access tree so that chunks outside the range are not served. The
combination---top-down per-segment derivation, parallel per-track subtrees, and
segment-range delegation---gives simultaneous access to selected tracks (captions
and audio, say) while unlocking only selected segments of others
(Corollary~\ref{cor:independent}), all from a single key hierarchy and a single
content-addressed store.

\begin{remark}[Why this is hard elsewhere]
Systems that encrypt a file under one key cannot unlock a segment or a track
without either re-encrypting or releasing the whole-file key. Systems that
encrypt each chunk under an independent random key must transmit or store a key
per chunk, so a player seeking within a two-hour video manages thousands of
unrelated keys and a grantor must enumerate them to delegate a range. Safecloud's
top-down derivation makes the per-segment keys a deterministic function of one
subtree key, so a range is delegated by handing over one node of the tree, and a
player seeks by deriving rather than by lookup. This is the specific combination
we claim is novel; the individual primitives (HKDF hierarchies, segment
encryption, capability tokens) are not.
\end{remark}

\subsection{Cost of delegation and seeking, analytically}

Even ahead of runtime measurement, the asymptotic costs separate Safecloud from
the two natural baselines. Let a track span $n$ chunks and let a grantee receive
a range of $m\le n$ chunks. Table~\ref{tab:complexity} states the key-management
costs; they follow directly from Theorems~\ref{thm:agree} and~\ref{thm:contain}
and Proposition~\ref{prop:stream}.

\begin{table}[t]
\centering
\caption{Key-management cost. $n$ chunks per track, $m$ granted.}
\label{tab:complexity}
\small
\begin{tabular}{@{}lccc@{}}
\toprule
 & Single & Per-chunk & Safecloud \\
 & file key & random keys & (this work) \\
\midrule
Key material to delegate range $m$ & 1 (whole) & $m$ & 1 (subtree) \\
Seek to arbitrary chunk            & $O(1)$    & lookup of 1 & $O(1)$ derive \\
Keys a grantee must store          & 1 (whole) & $m$ & 1 \\
Unlock 1 segment, 1 track          & no & yes ($m$ keys) & yes (1 key) \\
Re-encryption to revoke range      & whole file & none & none\,$^\dagger$ \\
\bottomrule
\end{tabular}
\end{table}

The single-file-key baseline cannot grant a strict sub-range at all without
disclosing the whole-file key; the per-chunk-random-key baseline can, but at the
cost of $m$ keys transmitted, stored, and enumerated per grant. Safecloud
delegates any contiguous range as a single subtree key
(Theorem~\ref{thm:agree}), which derives exactly that range and nothing else
(Theorem~\ref{thm:contain}), and a player seeks by one derivation
(Proposition~\ref{prop:stream}). The advantage is therefore not constant-factor
but a reduction from $O(m)$ to $O(1)$ in delegated and stored key material.

\noindent\footnotesize$^\dagger$Revocation of an already-disclosed range still
requires rotating that range's subtree, since a grantee retains derived keys;
forward revocation (ceasing delivery and future grants) is $O(1)$ via the access
tree. This is the standard limit of any key-disclosure scheme.\normalsize

\subsection{Evaluation methodology}
\label{sec:eval-method}

The reference implementation (\S\ref{sec:impl}) admits direct measurement of the
claims above, which we will report in the camera-ready: (i) seek latency---time
from a requested timestamp to first decrypted segment---as a function of file
size and offset, expected constant by Proposition~\ref{prop:stream}, against an
unmodified-fetch baseline; (ii) delegation cost and grantee key-store size versus
the per-chunk-key baseline, expected $O(1)$ versus $O(m)$ by
Table~\ref{tab:complexity}; (iii) convergent-deduplication ratio on a corpus with
realistic re-upload and shared-version structure; (iv) Prolly reconciliation
bytes exchanged after simulated Drop churn versus the naive $O(n)$ inventory
exchange; and (v) sustained concurrent browser-tab Drops and aggregate
throughput. Baselines are encrypted-IPFS for storage and a whole-file-key scheme
for streaming. We state these as methodology rather than results: the
architecture and proofs are the present contribution, and the measurements are
in progress on the open-source implementation.

\section{A Verifiable, Zero-Sum Storage Economy}
\label{sec:economy}

\subsection{Earning Safebux for storage and routing}

Drops earn Safebux for holding and serving chunks; Jets earn Safebux for
routing. Settlement uses the OpenClaiming trustline envelope of
Intercloud~\cite{magarshak2026intercloud}: a payer signs a claim authorising up
to a monotonically increasing maximum on a numbered line, and a Jet or Drop
accepts work only against a claim whose signature verifies, whose line maximum
exceeds the last seen, and whose on-chain balance suffices. Payment is therefore
metered and verifiable: a provider can prove what it is owed, and a payer cannot
replay a stale authorisation. The economy is \emph{zero-sum} in that Safebux paid
by consumers equals Safebux earned by providers minus protocol fees; no value is
minted by the act of storage itself, in contrast to systems that subsidise
storage through block rewards.

\subsection{Proof-of-storage without sealing}

Storage honesty is enforced by lightweight challenge-response: a Jet asks a Drop
to sign a fresh $(\cid,\mathrm{nonce})$ pair; a Drop that no longer holds the
chunk cannot answer, and sustained failure is penalised under the
chilling-effect Proof-of-Corruption discipline, with stake slashed and the chunk
re-replicated. The cost of an honest proof is one signature.

This is the axis on which Safecloud is structurally cheaper than Filecoin. In
Filecoin, a provider proves storage by \emph{sealing}: a deliberately slow,
memory- and sequentially-hard encoding of each sector, compressed by a SNARK and
committed on-chain, taking on the order of tens of minutes per sector and
incurring substantial storage overhead, and---by design---proving \emph{physical
replication} rather than confidentiality, so private data must be encrypted
separately before it is ever sealed~\cite{filecoin2017,filecoin_porep}.
Safecloud inverts this: confidentiality is intrinsic because data is
encrypt-then-addressed on the owner's device, and the proof obligation is a
signature over a random challenge rather than a slow encoding. We do not claim
Safecloud's challenge proves the same property as proof-of-replication---it does
not prove unique physical dedication of space---but for the goal of \emph{keeping
an honest provider honest and paying it verifiably}, the challenge-plus-stake
mechanism achieves it at a fraction of the computational cost and without the
separate encryption step sealing requires.

\subsection{Comparison with the state of the art}

Table~\ref{tab:sota} situates Safecloud against the dominant decentralised
storage systems on the axes that determine cost, privacy, and reach.

\begin{table*}[t]
\centering
\caption{Safecloud versus state-of-the-art decentralised storage.}
\label{tab:sota}
\small
\begin{tabular}{@{}lccccc@{}}
\toprule
Property & IPFS~\cite{benet2014ipfs} & Filecoin~\cite{filecoin2017} & Storj~\cite{storj2018} & Safecloud \\
\midrule
Encrypted at rest by default        & no  & no (encrypt first)        & yes & yes \\
Provider can read plaintext         & yes & yes (unless pre-encrypted) & no  & no \\
Proof-of-storage cost               & none (no proof) & high (sealing, mins/sector) & moderate (audits) & low (signature) \\
Provider infrastructure required    & node & dedicated miner + GPU/SSD & dedicated node & browser tab \\
Native streaming / random access    & partial & no & partial & yes (top-down derivation) \\
Per-track / per-segment unlocking   & no  & no & no & yes \\
Convergent dedup                    & yes (plaintext) & no & no & yes (ciphertext, owner-scoped) \\
Value model                         & none & block-reward subsidised & token market & zero-sum (Safebux) \\
\bottomrule
\end{tabular}
\end{table*}

\subsection{Threat model}

We make the adversary explicit. The adversary may control any number of Drops
and may control Jets, may observe all network traffic, and may collude across
nodes; it may not break AES-256-GCM, SHA-256, or the signature scheme, and it
does not control the owner's device (which holds keys and plaintext by
necessity). Under this model: a malicious Drop learns nothing from the ciphertext
it holds (Theorem~\ref{thm:blind}); a malicious Drop or Jet cannot return altered
bytes undetected (Theorem~\ref{thm:retrieval}); a malicious Jet cannot forge a
capability, since delegations are signed by the owner's encryption root
(\S\ref{sec:delegation}); a malicious Jet that selectively withholds chunks
degrades availability but not confidentiality or integrity, and availability is
defended by replication across independently selected Drops and, once
implemented, erasure coding; and a provider that claims storage it does not hold
fails the proof-of-storage challenge and is slashed. The residual exposure is
availability under a sufficiently large fraction of colluding or withholding
Drops, which is the standard limit of replicated storage and is accounted for by
the three-colour liveness model of Intercloud~\cite{magarshak2026intercloud}.

\section{Reconciliation and Routing}
\label{sec:routing}

\textbf{Inventory reconciliation.} A Drop and a Jet each maintain a Prolly
tree~\cite{prolly2020} over the CIDs the Drop holds. On reconnect, equal subtree
roots are skipped and only the difference is exchanged, so reconciliation is
$O(\mathrm{diff}\cdot\log n)$ rather than $O(n)$. On first contact, before any
shared Prolly state exists, the Drop sends a compact Bloom
filter~\cite{bloom1970} of its CIDs so the Jet can probe membership before
routing.

\textbf{Chunk routing.} A Jet selects Drops to store or serve a set of CIDs and
replicates each chunk to a configurable number of Drops. Retrieval supports
range requests against a $\rho \rightarrow [\cid_i]$ index. Each served chunk
carries a Merkle proof so the Cloud can apply Theorem~\ref{thm:retrieval}.

\textbf{Settlement and proof-of-storage.} Pricing and payment use the
OpenClaiming trustline envelope of Intercloud~\cite{magarshak2026intercloud}: a
payer signs an EIP-712 claim authorising up to a monotonically increasing
maximum on a numbered line; a Jet verifies the signature, checks monotonicity
against a local cache, and confirms on-chain balance. Storage is kept honest by
random proof-of-storage challenges in which a Jet asks a Drop to sign a
$(\cid,\mathrm{nonce})$ pair; a Drop that cannot produce the chunk fails, and
sustained failure is penalised under the chilling-effect Proof-of-Corruption
discipline of Intercloud, with stake slashed and the work re-replicated.

\section{Implementation Status}
\label{sec:impl}

Safecloud is implemented as an open-source plugin for the Q/Intercoin platform,
in JavaScript for both the browser (Cloud, Drops, and the Jet socket client) and
Node.js (the Jet server), with cryptography running through a shared
\texttt{Q.Data}/\texttt{Q.Crypto} layer that is byte-compatible across browser
SubtleCrypto, Node.js \texttt{crypto}, and PHP. We state implementation status
honestly, because the value of an open-source systems paper depends on the
reader being able to distinguish what runs from what is designed.

\textbf{Implemented.} The end-to-end pipeline of the core construction is in
place: deterministic key derivation and the full one-root hierarchy
(\S\ref{sec:keys}); chunking, AES-256-GCM encryption with position-binding AAD,
and CIDv1 computation over authenticated ciphertext (\S\ref{sec:cid});
convergent deduplication; the public Merkle tree, manifest, and binding proof,
with Merkle-verified retrieval (\S\ref{sec:trees}); range capability derivation
and the signed delegation construction (\S\ref{sec:delegation}); the Cloud
\texttt{store}/\texttt{fetch}/\texttt{grant}/\texttt{reshare} SDK; the Drop
IndexedDB store with LRU eviction; the Jet server with the Drop registry,
reconnection handling, and the \texttt{store}/\texttt{retrieve} routing over a
shared socket namespace; and the Prolly and Bloom primitives.

\textbf{Designed and in progress (reference implementation incomplete).}
Several components are specified and partially wired but not yet enforced in the
current build, and the paper's claims about them are claims about the design,
not measurements of running code: OpenClaiming authorisation verification in the
Jet \texttt{put}/\texttt{get} handlers (presently accepted but not verified);
OpenClaiming payment verification and the monotonic-trustline check (the Drop
\texttt{checkPayment}/\texttt{checkAuthorization} hooks currently return
\texttt{true}); proof-of-storage challenge signing (the challenge path exists but
the response is a placeholder); erasure coding (the system currently replicates
rather than erasure-codes); Jet-to-Jet peering (a stub); the range-request CID
routing index; and the replacement of some inline tree helpers with the
canonical \texttt{Q.Data} tree methods. We separate these explicitly so that the
proofs of \S\S\ref{sec:keys}--\ref{sec:trees}, which concern the implemented
core, are not read as covering the unenforced settlement and proof-of-storage
layer.

\textbf{Open source.} The reference implementation is publicly available, in
contrast to the Safebox and Safebots systems referenced herein.

\section{Discussion}
\label{sec:discussion}

\textbf{What the storage provider can and cannot do.} A Drop holds opaque
ciphertext it cannot read (Theorem~\ref{thm:blind}), cannot alter undetectably
(Theorem~\ref{thm:retrieval}), and---once payment and proof-of-storage
enforcement are completed (\S\ref{sec:impl})---cannot be paid for storage it does
not hold. What Safecloud does not defend against in this version is a Drop simply
deleting data; availability is provided by replication across multiple Drops and,
in future, by erasure coding, and is incentivised by proof-of-storage, but a
chunk held by too few honest Drops can still be lost. The three-colour
liveness/availability accounting of Intercloud~\cite{magarshak2026intercloud}
applies to the routing layer.

\textbf{Confidentiality scope.} Safecloud protects data at rest and in transit on
untrusted storage and routing nodes; it does not protect the Cloud's own device,
which necessarily holds keys and plaintext. Server-side custody requiring that
even the operator not read plaintext is the province of the attested Safebox
environment~\cite{magarshak2026safebox}, with which Safecloud composes.

\textbf{Browser supply side.} The distinctive choice is that a storage provider
is a web page. This lowers the barrier to participation to leaving a tab open,
at the cost of the volatility of browser sessions; the Prolly/Bloom
reconciliation and the stable Drop identity across reconnects
(\S\ref{sec:routing}) are what make a population of transient tabs usable as
durable storage in aggregate.

\section{Conclusion}
\label{sec:conclusion}

Safecloud is a distributed, encrypted storage and streaming cloud whose storage
and routing nodes never see plaintext and never hold keys. A single root secret
generates the entire key hierarchy by deterministic derivation; the same
derivation makes encryption convergent and content addressing blind; one
navigation path threads integrity, confidentiality, and authorisation through
three parallel trees; and range capabilities delegate access offline. We proved
derivation agreement, blind verifiability, and retrieval soundness for the
implemented core, and we stated precisely which of the settlement and
proof-of-storage components remain designed rather than enforced. Safecloud
reuses the OpenClaiming envelope and chilling-effect discipline of Intercloud
for pricing and proof-of-storage, and composes with Safebox where confidential
custody is required. Where Safebox seals computation and Intercloud coordinates
economies of sealed computations, Safecloud stores the data those economies act
upon---encrypted, content-addressed, and served by nodes that cannot read it.
Safecloud~2.0~\cite{magarshak2026safecloud2} carries the same Cloud/Jet/Drop
substrate forward from storing data to verifying computation upon it.

\balance

\end{document}